\documentclass[conference]{IEEEtran}
\IEEEoverridecommandlockouts
\usepackage{footnote}
\usepackage[utf8]{inputenc}
\usepackage[english]{babel}
\usepackage[T1]{fontenc}
\usepackage{amsmath, amssymb, amsthm}
\usepackage{mathtools}
\usepackage{braket}
\usepackage{halloweenmath}
\usepackage{cite}
\usepackage{graphicx}
\usepackage{xcolor}
\usepackage[dvipsnames]{xcolor}
\usepackage{tcolorbox}
\usepackage{tabularx}
\usepackage{colortbl}
\usepackage{tikz}
\usepackage{enumitem}
\usepackage{textcomp}
\usepackage{algorithm}
\usepackage{algpseudocode}
\usepackage{float}
\usepackage{placeins}
\usepackage{subfigure}
\usepackage{caption}
\usepackage{subcaption}
\usepackage[left=0.64in,right=0.64in,top=0.72in,bottom=1.07in]{geometry}
\newtheorem{theorem}{Theorem}
\newtheorem{lemma}{Lemma}
\newtheorem{corollary}{Corollary}

\newtheorem{example}{Example}
\newtheorem{remark}{Remark}

\newlength{\myeqskip} 
\AtBeginDocument{%
    \setlength\abovedisplayskip{\myeqskip}%
    \setlength\belowdisplayskip{\myeqskip}%
    \setlength\abovedisplayshortskip{\myeqskip-\baselineskip}%
    \setlength\belowdisplayshortskip{\myeqskip}}
\def\BibTeX{{\rm B\kern-.05em{\sc i\kern-.025em b}\kern-.08em
		T\kern-.1667em\lower.7ex\hbox{E}\kern-.125emX}}

\allowdisplaybreaks
\begin{document}


\title{Non-Binary Quasi-Cyclic LDPC Codes with Entanglement Assistance}
 \author{\IEEEauthorblockN{Pavan Kumar and Shayan  Srinivasa Garani}
  \IEEEauthorblockA{Department of Electronic Systems Engineering, Indian Institute of Science, Bengaluru-560012, India\\
    Emails:\{pavankumar1, shayangs\}@iisc.ac.in}
 }
\maketitle
\begin{abstract}
We construct two families of non-binary entanglement assisted (EA) quasi-cyclic (QC) quantum low-density parity-check (QLDPC) codes over arbitrary finite fields, each possessing a precisely determined code rate.
 The first family is derived from a pair of non-binary classical QC-LDPC codes, designed such that the unassisted portion of the overall Tanner graph of the resulting EA-QC-QLDPC code is free of 4-cycles. The second family, on the other hand, is constructed from a single non-binary classical QC-LDPC code whose Tanner graph itself is 4-cycle-free. In developing the codes belonging to the first family, we employ a \emph{single Bell pair} to establish entanglement between the transmitter and the receiver, thereby minimizing the required entanglement resources. Furthermore, these constructions demonstrate that careful graph-based design can effectively balance error-correction performance with entanglement consumption, providing a practical approach for realizing efficient non-binary EA-QC-QLDPC codes.

 \end{abstract}
\section{Introduction}
LDPC codes were first introduced by Gallager in 1963 \cite{gallager1963low}. There was renewed interest in these codes with the advent of iterative decoding techniques. In 1998, Davey and MacKay~\cite{davey1998low} highlighted the advantages of non-binary LDPC codes, demonstrating through computer simulations that these codes could outperform their binary counterparts. This led to further studies, including the work of \cite{huang2010large}, which demonstrated that non-binary LDPC codes not only achieve superior performance for short lengths but also integrate seamlessly with high-order modulation schemes in multiple-input, multiple-output (MIMO) systems. By avoiding bit-to-symbol mapping and demapping, these codes can further enhance the code performance. There is a renewed interest in designing high-performance non-binary LDPC codes, including various aspects of decoding~\cite{costantini2012non, kang2010quasi, song2009unified, zhou2007high, chen2010two,voicila2010low,savin2008min,nozaki2012analysis, nozaki2011analysis, liu2012computing, zhao2012class}. 


The success of classical non-binary LDPC codes has naturally motivated the exploration of their quantum analogues, resulting in the development of both   unassisted non-binary quantum LDPC (QLDPC) and entanglement-assisted (EA) codes, along with decoding strategies \cite{xie2016reliable, shao2018entanglement, andriyanova2012quantum, hwang2013decoding}. Non-binary QLDPC codes provide improved error-correction performance and greater design flexibility compared to their binary counterparts. However, challenges persist in decoding complexity and in minimizing the use of entangled resources.

A key challenge in designing effective QLDPC codes relies on satisfying the dual-containment property \cite{calderbank1998quantum,ketkar2006nonbinary}, which  introduces 4-cycles in the Tanner graph, leading to message correlations that lead to slow decoding convergence and high error floors. This limitation can be overcomed using EA-QLDPC codes, where preshared entangled bits  (ebits) between the sender and receiver relax structural constraints, enabling the construction of codes with enhanced properties \cite{brun2006correcting, galindo2019entanglement}. The \emph{ebits} preshared with the receiver are assumed to be error-free and do not participate in the decoding process.

In \cite{shao2018entanglement}, the authors constructed EA-QLDPC codes over the finite field $\mathbb{F}_{2^{\ell}}$ using a single classical code. However, this approach results in the presence of 4-cycles in the unassisted portion of the overall Tanner graph. Similarly, the constructions in \cite{andriyanova2012quantum, Decoder} also lead to  QLDPC codes over $\mathbb{F}_{2^{\ell}}$ that contain 4-cycles. In contrast, \cite{xie2016reliable} presented non-CSS codes over $\mathbb{F}_{4}$ derived from a pair of classical QC-LDPC codes, avoiding 4-cycles. Nevertheless, existing works do not provide a comprehensive solution, namely, constructing QLDPC codes over arbitrary finite fields with Tanner graphs that are devoid of short cycles. This gap motivates the design of non-binary EA-QLDPC codes whose unassisted portion of overall Tanner graphs are free of $4$-cycles, while using a \emph{single ebit}.
 
Our Contributions in this paper are as follows: First we determine the exact rank of the parity-check matrix of a specific  array-based non-binary classical QC-LDPC code. Next, we propose two families of non-binary EA-QC-QLDPC codes defined over arbitrary finite fields, each with an analytically determined \textit{code rate} and required \textit{ebits}. The first family is constructed from a pair of non-binary classical QC-LDPC codes, ensuring that the unassisted portion of the overall Tanner graph of the resulting EA-QC-QLDPC code is free of 4-cycles. For a clear understanding of the term \textit{unassisted portion} of the overall Tanner graph, refer to Figure 2 in \cite{kumar2024entanglement}. The second family is derived from a single non-binary classical QC-LDPC code whose Tanner graph itself is 4-cycle-free. In designing the first family, we employ a \emph{single Bell pair} to establish entanglement between the transmitter and receiver, thereby minimizing the required entanglement resources, useful in quantum communications.


The paper is organized as follows: In Section~\ref{Sec.2}, we present the construction of non-binary QC-LDPC codes by tiling permutation matrices of prime order while ensuring an exact code rate. In Section~\ref{Sec.3}, this framework was extended to construct two families of non-binary EA-QC-QLDPC codes, one of which requires only a \emph{single ebit} to be shared between the transmitter and receiver. Finally, Section~\ref{Sec.5} provides concluding remarks summarizing the key findings of this work.
\section{Non-Binary QC-LDPC codes}\label{Sec.2}
Let \(\mathbb{F}_{q}\) be a finite field, where \(q = p^{m}\) for some prime \(p\) and positive integer \(m\). This notation for the field will be used consistently throughout the paper. For an integer $n\geq3$,  define a diagonal matrix $D^{(0)}=\text{diag}\{a_{0},a_{1},\ldots,a_{n-1}\}, \text{ where } a_{i}\in\mathbb{F}_{q}\setminus\{0\}.$
For each $1 \leq i \leq n - 1$, we define $D^{(i)}$ by cyclically shifting each row of $D^{(0)}$ $i$ positions to the right. For the matrix $D^{(i)} - D^{(j)}$, where $1 \leq i, j \leq n - 1$, it is easy to see that
\begin{equation}\label{sumofrowzero}
  \sum_{k=0}^{n-1} a_k^{-1} R_k = \mathbf{0},
\end{equation}
where \(\mathbf{0}\) denotes the zero vector over \(\mathbb{F}_{q}\) of length $n$, and $R_k$ represents the \(k^{\text{th}}\) row of the matrix \(D^{(i)} - D^{(j)}\), and the equation \eqref{sumofrowzero} will be used later in the paper.
 Next, for all $0\leq i,j\leq n-1$, we define \begin{equation}\label{matrix_product}
      D^{(i)}\circ D^{(j)}=D^{((i+j) \text{ (mod $n$}))} \text{ and }
    (D^{(i)})^{\Gamma}=D^{(n-i)}.
 \end{equation}
Throughout this paper, we adopt a slight abuse of notation to facilitate the expansion of $\prod\limits_{i=0}^{n} (D^{(a_{i})}\circ D^{(b_{i})})$ as\\
\begin{equation*}
    \prod\limits_{i=0}^{n} (D^{(a_{i})}\circ D^{(b_{i})})=(D^{(a_{0})}\circ D^{(b_{0})})\circ \cdots\circ (D^{(a_{n})}\circ D^{(b_{n})}).
\end{equation*}
For a positive integer $n$, let $\mathbb{Z}_{n}=\{0,1,\ldots,n-1\}$ be an integer ring. Let $m, t$ be two positive integers, such that $m<t$. We define a parity-check matrix $H$ and the corresponding model matrix $M$ by the following equations:
\begin{equation}\label{cycleparity}
H=\begin{bmatrix}
D^{(a_{0,0})} & D^{(a_{0,1})}& \cdots & D^{(a_{0,t})} \\
D^{(a_{1,0})} & D^{(a_{1,1})} & \cdots &  D^{(a_{1,t})} \\
\vdots & \vdots & \ddots & \vdots\\
D^{(a_{m,0})} & D^{(a_{m,1})} & \cdots & D^{(a_{m,t})} 
\end{bmatrix}, 
\end{equation}
\begin{equation}\label{modelmatrix}
 	M=\begin{bmatrix}
 		a_{0,0}&a_{0,1}&\cdots &a_{0,t}\\		a_{1,0}&a_{1,1}&\cdots&a_{1,t}\\	
        \vdots&\vdots&\ddots&\vdots\\
 		a_{m,0}&a_{m,1}&\cdots&a_{m,t}	
 	\end{bmatrix},
 \end{equation}
where $a_{i,j}\in\mathbb{Z}_{n}$, for all $i,j$. In the parity-check matrix $H$, defined in~\eqref{cycleparity}, the expressions $[D^{(a_{i,0})},\ D^{(a_{i,1})},\ \ldots,\ D^{(a_{i,t})}]$ and $[(D^{(a_{0,j})})^{T},\ (D^{(a_{1,j})})^{T},\ \ldots,\ (D^{(a_{m,j})})^{T}]^{T}
$ are referred to as $i^{\text{th}}$ \emph{block-row} and $j^{\text{th}}$ \emph{block-column} of $H$, respectively. We adopt this terminology throughout the paper.
      
 A closed path of length $2k$ in any parity-check matrix of the form in \eqref{cycleparity} is a sequence of block-row and block-column index pairs $(i_0, j_0), (i_0, j_1); (i_1, j_1), (i_1, j_2); \ldots ;(i_{k-1}, j_{k-1})$, $ (i_{k-1}, j_0)$, with $i_\ell \neq i_{\ell+1}, j_\ell \neq j_{\ell+1}$, for $\ell = 0, 1, \ldots, k-2$, and  $i_{k-1} \neq i_0, j_{k-1} \neq j_0 $.
 
 
In the following theorem, a simple generalization of Theorem 2.1 in \cite{fossorier2004quasicyclic}, we establish a general condition for the existence of 
$2k$-cycles in non-binary QC-LDPC codes with parity-check matrix $H$ in \eqref{cycleparity}.
\begin{theorem}\label{2k-cycletheorem}
Let $H$, defined in \eqref{cycleparity}, be a parity-check matrix of a QC-LDPC code $\mathcal{C}$. Then, there exists a $2k$-cycle in the Tanner graph of $\mathcal{C}$ if and only if there exists a closed path $(i_0, j_0), (i_0, j_1); (i_1, j_1), (i_1, j_2); \ldots ;(i_{k-1}, j_{k-1})$, $ (i_{k-1}, j_0)$ in $H$ such that
\begin{equation}\label{cyclcondition.1}
    \prod_{t=0}^{k-1} \left(D^{(a_{i_{t},j_{t}})}\circ (D^{(a_{i_{t},j_{t+1}})})^{\Gamma}\right)=D^{(0)}, \text{ where } j_{k}=j_{0}.
\end{equation}
\end{theorem}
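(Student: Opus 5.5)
Since every entry $a_k$ of $D^{(0)}$ is nonzero, the Tanner graph of $H$ depends only on the supports of the blocks. Each block $D^{(s)}$ is therefore a monomial matrix whose support is that of the circulant permutation matrix $P^{s}$, where $P$ is the cyclic shift to the right. Concretely, row $r$ of $D^{(s)}$ has its unique nonzero entry in column $(r+s)\bmod n$. Under this identification $\circ$ is just index addition modulo $n$, $\Gamma$ is negation modulo $n$, and $D^{(0)}$ stands for the zero shift. Condition \eqref{cyclcondition.1} then reads
\[
\sum_{t=0}^{k-1}\bigl(a_{i_t,j_t}-a_{i_t,j_{t+1}}\bigr)\equiv 0 \pmod n .
\]
The plan is to reduce the theorem to this binary statement and then reproduce the argument of Theorem 2.1 in \cite{fossorier2004quasicyclic}.

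\textbf{Cycles give closed paths satisfying the condition.} Take a $2k$-cycle in the Tanner graph. Alternate check and variable nodes, so the cycle visits check rows $(i_t,r_t)$ and variable columns $(j_t,c_t)$, where $i_t,j_t$ are block indices and $r_t,c_t\in\mathbb{Z}_n$ are positions inside the block. Consecutive edges of the cycle give the closed path $(i_0,j_0),(i_0,j_1);(i_1,j_1),\ldots$. The path conditions $i_\ell\neq i_{\ell+1}$ and $j_\ell\neq j_{\ell+1}$ hold because a check row meets a block-column in exactly one entry, since each block is monomial. Hence two distinct variable neighbours of a single check lie in different block-columns, and dually two distinct check neighbours of a single variable lie in different block-rows.

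Now follow the cycle. Starting from row $r_0$ in block $(i_0,j_0)$, the variable is at column $c_0=r_0+a_{i_0,j_0}$. Moving within that variable's column into block $(i_0,j_1)$ gives row $r_1=c_0-a_{i_0,j_1}$ (here I match the path's index ordering; the exact ordering convention of the closed path must be handled carefully). Iterating, after one full traversal the row index has shifted by
\[
\sum_{t}\bigl(a_{i_t,j_t}-a_{i_t,j_{t+1}}\bigr).
\]
Closing the cycle forces this shift to be $0 \bmod n$, which is \eqref{cyclcondition.1}.

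\textbf{Closed paths satisfying the condition give cycles.} Given a closed path satisfying \eqref{cyclcondition.1}, pick any starting row $r_0$ and follow the same recursion. The alternating sequence of nonzero entries returns to its start after $k$ steps, because the total shift is zero, and so it defines a closed walk of length $2k$.

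\textbf{Main obstacle.} The delicate point is showing that this closed walk is a genuine cycle, that is, that it does not revisit a node earlier. This is exactly the subtlety in Fossorier's theorem. Under the block-level path conditions, consecutive nodes are always distinct. If the walk did repeat a node earlier, it would contain a shorter closed walk, and that shorter closed walk would still contain a cycle. I would follow Fossorier and adopt the convention, implicit in the statement, that cycles are counted via such block-level closed paths. Equivalently, I would argue that the shortest closed path satisfying the condition yields a true cycle.

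Finally I would remark that the nonzero scalars $a_i$ play no role, because the Tanner graph is determined by the support alone.
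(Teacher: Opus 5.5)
Your reduction to supports and your ``only if'' direction are correct. They are the Fossorier-type argument that the paper's one-line proof points to.

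One bookkeeping fix: the step from $(i_t,j_t)$ to $(i_t,j_{t+1})$ is a move along a check row, not within a variable's column. Check $(i_t,r_t)$ is adjacent to the variables $(j_t,\,r_t+a_{i_t,j_t})$ and $(j_{t+1},\,r_t+a_{i_t,j_{t+1}})$, so $c_{t+1}-c_t\equiv a_{i_t,j_{t+1}}-a_{i_t,j_t}$. Closing up gives the negative of the exponent in \eqref{cyclcondition.1}, which is harmless.

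The genuine gap is the ``if'' direction, which the paper's proof does not address either. Neither of your fallbacks closes it:
\begin{itemize}
\item Declaring that cycles are ``counted via block-level closed paths'' makes the claim tautological.
\item Your repeated-node argument can be made rigorous by taking a closest pair of repeated vertices; non-backtracking excludes a gap of $2$. But it produces a cycle of length strictly less than $2k$, not a $2k$-cycle.
\item The minimality argument only settles the least admissible $k$.
\item Varying $r_0$ does not help. The $n$ lifts of a closed path are images of one another under the quasi-cyclic automorphism $r\mapsto r+1$, so if one lift is not simple, none is.
\end{itemize}

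In fact no argument can close this gap, because the ``if'' direction is false for $k\ge 4$. Take $m=1$, $t=3$ and all $a_{i,j}=0$, so both block-rows of $H$ equal $[D^{(0)}\;D^{(0)}\;D^{(0)}\;D^{(0)}]$. The Tanner graph is $n$ disjoint copies of $K_{2,4}$, which has no cycle longer than $4$. Yet $(i_0,i_1,i_2,i_3)=(0,1,0,1)$, $(j_0,j_1,j_2,j_3)=(0,1,2,3)$ is a closed path that visits eight distinct blocks and satisfies \eqref{cyclcondition.1} with $k=4$. Its lift passes through check $(0,r)$ twice.

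What you can prove, and should state, is Fossorier's girth form. Every $2k$-cycle yields a closed path of length $2k$ satisfying \eqref{cyclcondition.1}. Every such closed path lifts to a cyclically non-backtracking closed walk, which contains a cycle of length at most $2k$. Hence the girth equals $2k_{\min}$, and this is exactly why Fossorier's Theorem 2.1 quantifies over all lengths up to the target girth.

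For $k\in\{2,3\}$ the cyclic distinctness conditions put the $k$ check nodes in distinct block-rows and the $k$ variable nodes in distinct block-columns. So the lift is automatically a cycle, and the exact biconditional does hold there. This covers Corollary~\ref{4-cyclecoro}, the only place the theorem is used.
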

\begin{proof}
    One can easily prove the lemma employing the same arguments as those given in \cite[Theorem 1]{kumar2026entanglement}.
\end{proof}

\begin{corollary}\label{4-cyclecoro}
   There does not exist a 4-cycle in the Tanner graph of a parity-check matrix $H$, defined in \eqref{cycleparity}, if and only if   \begin{equation*}
       (a_{i_{1},j_{0}}-a_{i_{0},j_{0}}) \neq (a_{i_{1},j_{1}}-a_{i_{0},j_{1}})\pmod{n},
   \end{equation*}
for all $0\leq i_{0}<i_{1}\leq m$ and $0\leq j_{0}<j_{1}\leq t$.
\end{corollary}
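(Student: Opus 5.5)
The plan is to specialize Theorem~\ref{2k-cycletheorem} to $k=2$ and translate the product condition \eqref{cyclcondition.1} into arithmetic on exponents modulo $n$, using the rules in \eqref{matrix_product}. The operation $\circ$ acts on superscripts by addition mod $n$, and $\Gamma$ sends the superscript $a$ to $n-a \equiv -a \pmod n$. So any product of the form \eqref{cyclcondition.1} equals $D^{(s)}$, where $s$ is the alternating sum of the exponents reduced mod $n$. The equation $D^{(s)}=D^{(0)}$ then holds exactly when $s\equiv 0 \pmod n$, because the matrices $D^{(0)},\dots,D^{(n-1)}$ are pairwise distinct: their nonzero patterns are distinct cyclic shifts.

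First I will write out the closed paths of length $4$. Such a path has the form $(i_0,j_0),(i_0,j_1);(i_1,j_1),(i_1,j_0)$ with $i_0\neq i_1$ and $j_0\neq j_1$. Applying \eqref{cyclcondition.1} with $k=2$ and $j_2=j_0$, the product is
\[
\bigl(D^{(a_{i_0,j_0})}\circ (D^{(a_{i_0,j_1})})^{\Gamma}\bigr)\circ\bigl(D^{(a_{i_1,j_1})}\circ (D^{(a_{i_1,j_0})})^{\Gamma}\bigr).
\]
This equals $D^{(s)}$ with $s\equiv a_{i_0,j_0}-a_{i_0,j_1}+a_{i_1,j_1}-a_{i_1,j_0} \pmod n$. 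By the theorem, a $4$-cycle exists if and only if some such path has $s\equiv 0$. Rearranging, this is the condition $a_{i_1,j_0}-a_{i_0,j_0}\equiv a_{i_1,j_1}-a_{i_0,j_1} \pmod n$.

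Next I will check that restricting to $i_0<i_1$ and $j_0<j_1$ loses nothing. Swapping $i_0\leftrightarrow i_1$ negates both sides of the congruence. Swapping $j_0\leftrightarrow j_1$ exchanges the two sides. Either way the condition is unchanged, so it is enough to consider ordered pairs. Taking the contrapositive of the ``if and only if'' then gives the corollary.

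The only delicate point is the justification that $D^{(s)}=D^{(0)}$ forces $s\equiv 0$, and that the formal rules in \eqref{matrix_product} are exactly what Theorem~\ref{2k-cycletheorem} uses. Both follow directly from the definitions: $D^{(i)}$ has its nonzero entries at positions $(r,r+i \bmod n)$, and these support patterns differ for distinct $i\in\mathbb{Z}_n$. I expect no real obstacle beyond this bookkeeping.
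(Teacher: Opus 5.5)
Your proposal is correct and follows the route the paper intends. The paper states the corollary as an immediate consequence of Theorem~\ref{2k-cycletheorem} with $k=2$: read the $\circ$ and $\Gamma$ rules in \eqref{matrix_product} as addition and negation of exponents modulo $n$, then take the contrapositive. Your added checks, that $D^{(s)}=D^{(0)}$ forces $s\equiv 0 \pmod n$ and that restricting to $i_0<i_1$, $j_0<j_1$ loses nothing, fill in details the paper leaves implicit.
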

We now proceed to the construction of non-binary QC-LDPC codes with exact code rates.

Let $\mathbb{F}_{r} = \{0, 1, \ldots, r-1\}$ denote the finite field of order $r$, where $r$ is an odd prime such that the characteristic $p$ of the field $\mathbb{F}_{q}$, defined earlier, does not divide $r$, i.e., $p\nmid r$.

We define a matrix
\begin{equation}\label{model_matrix}
   M =
\begin{bmatrix}
b_{0,0} & b_{0,1}& \cdots & b_{0,r-1} \\
b_{1,0} & b_{1,1} & \cdots & b_{1,r-1} \\
\vdots  & \vdots    & \ddots  & \vdots \\
b_{r-1,0} & b_{r-1,1} & \cdots & b_{r-1,r-1}
\end{bmatrix},
\end{equation}
 where $b_{i,j} \in \mathbb{F}_{r}$, for all $0 \leq i,j \leq r-1$, and the entries in $M$ are populated as follows:
Assign zero along the entire row corresponding to $i=0$. Populate the row corresponding to ~$i=1$ (excluding the $1^{\mathrm{st}}$ column) with different nonzero elements randomly chosen from $\mathbb{F}_r$. Next, select a scalar $k_1 \in \mathbb{F}_r \setminus \{0, 1\}$ and multiply the row corresponding to $i=1$ by $k_{1}$ to populate the row corresponding to $i=2$. For the $i^{\text{th}}$ row, where $3\leq i\leq r-1$, choose a new scalar $k_{i-1} \in \mathbb{F}_r$ not previously used and not equal to 0 or 1, and compute the $i^{\mathrm{th}}$ row by multiplying the row associated to $i=1$ by $k_{i-1}$. 
Let $C_{M}$ denote the collection of all possible distinct $r \times r$ matrices $M$. For each $M \in C_{M}$, we define a block matrix~$\mathbf{A}_{M}$
\begin{equation}\label{A}
   \mathbf{A}_{M}=
\begin{bmatrix}
D^{(b_{0,0})} & D^{(b_{0,1})} & \cdots & D^{(b_{0,r-1})} \\
D^{(b_{1,0})} & D^{(b_{1,1})} & \cdots & D^{(b_{1,r-1})} \\
\vdots        & \vdots        & \ddots & \vdots          \\
D^{(b_{r-1,0})} & D^{(b_{r-1,1})} & \cdots & D^{(b_{r-1,r-1})}
\end{bmatrix},
\end{equation}
where $b_{i,j}$ is the $(i,j)$-th entry in $M$, and 
$M$ in \eqref{model_matrix} is referred to as the \emph{model matrix} of $\mathbf{A}_{M}$. 

Further, we define a class of block matrices associated with the class $C_{M}$ as follows:
\begin{equation}\label{classCA}
    \mathcal{C}_{A} = \left\{ \mathbf{A}_{M} : \text{ for all } M \in C_{M} \right\}.
\end{equation}

Let $M$ be a matrix in the class $C_{M}$, and let $R_i$ and $R_j$ be two distinct arbitrary rows of the matrix $M$. Then there exist distinct non-zero elements $k_i, k_j$ in $\mathbb{F}_{r}$ ($k_i\neq k_j$) such that $R_i = (0, k_i x_1, \ldots, k_i x_{r-1}) \quad \text{and} \quad R_j = (0, k_j x_1, \ldots, k_j x_{r-1}),$ where $x_{\ell} \in \mathbb{F}_{r}^{*}=\mathbb{F}_{r}\setminus\{0\}$, for all $1 \leq \ell \leq r-1$, and $x_{\lambda} \neq x_{\mu}$ for $\lambda \neq \mu$. It is easy to verify that the vector $R_i - R_j$ has all distinct entries under $\pmod{r}$. From Corollary~\ref{4-cyclecoro}, the Tanner graph of the non-binary classical QC-LDPC code with parity-check matrix $\mathbf{A}_{M}$ girth $> 4$.

To find code rate of the non-binary QC-LDPC code with a parity-check matrix {\small$\mathbf{A}_{M} \in \mathcal{C}_{A}$}, we first compute the $\mathrm{gfrank}$ over $\mathbb{F}_{q}$ of a specific block matrix ~{\small$\mathbf{H} =
   [h_{i,j}]_{0 \leq i,j \leq r-1},$} where {\small$h_{i,j} = D^{(\mathrm{mod}(ij, r))}$}, in the class $\mathcal{C}_{A}$ defined by \eqref{classCA}, and
 then discuss the {\small$\mathrm{gfrank}_{q}(\mathbf{A}_{M})$}. We need the following lemma to compute $\mathrm{gfrank}_{q}(\mathbf{H})$.
\begin{lemma}\label{lemma1}For $0\leq i,k\leq r-1$, let $R_{i}^{(k)}$ represents $i^{\text{th}}$ row in the $k^{\text{th}}$ block-row of $\mathbf{H}$, and let $\langle(v_{1},\ldots,v_{r^{2}}),(u_{1},\ldots,u_{r^{2}})\rangle=v_{1}u_{1}+\cdots+v_{r^{2}}u_{r^{2}}$, where $v_{i}$ and $u_{j}$ are the scalars from the field $\mathbb{F}_{q}$, for all $i$ and $j$. Then, we have
	\begin{center}
		$\langle R_{i}^{(k)},R_{j}^{(l)}\rangle=\begin{cases}
			r(a_{i})^{2},&\text{if } i=j\text{ and }k=l,\\
			0,&\text{if } i\neq j\text{ and }k=l,\\
			a_{i}a_{j},&\text{otherwise}.
		\end{cases}$
	\end{center}
\end{lemma}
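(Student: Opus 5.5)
The plan is to compute the inner product block by block, using the explicit support of each row of $\mathbf{H}$. Here $n=r$, so each $D^{(s)}$ is $r\times r$. By construction of $D^{(s)}$ (cyclically shifting each row of $D^{(0)}$ by $s$ positions to the right), the $i^{\text{th}}$ row of $D^{(s)}$ has exactly one nonzero entry, namely $a_i$, located in column $(i+s) \bmod r$.

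Since the $(k,c)$ block of $\mathbf{H}$ is $D^{(\mathrm{mod}(kc,r))}$, the row $R_i^{(k)}$ splits into $r$ segments, indexed by the block-column $c\in\{0,\ldots,r-1\}$. The $c^{\text{th}}$ segment has the single nonzero entry $a_i$ in local position $(i+kc)\bmod r$. Similarly, the $c^{\text{th}}$ segment of $R_j^{(l)}$ has $a_j$ in position $(j+lc)\bmod r$.

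The inner product $\langle R_i^{(k)},R_j^{(l)}\rangle$ is the sum over $c$ of the segment inner products. Each segment inner product equals $a_ia_j$ when the two nonzero positions coincide and $0$ otherwise. Hence
\begin{equation*}
\langle R_i^{(k)},R_j^{(l)}\rangle = a_ia_j\cdot \#\{c\in\mathbb{Z}_r : (k-l)c \equiv j-i \pmod r\},
\end{equation*}
where the count is an integer, read as an element of $\mathbb{F}_q$ through the prime field. So everything reduces to counting solutions of a linear congruence modulo $r$, and I would split into cases.

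\emph{Case $k=l$.} The congruence becomes $0\equiv j-i \pmod r$. If $i=j$ it holds for all $r$ values of $c$, giving $r\,a_i^2$. If $i\neq j$ then, since $0\le i,j\le r-1$, it has no solution, giving $0$.

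\emph{Case $k\neq l$.} Then $k-l$ is a nonzero element of $\mathbb{F}_r$, and $\mathbb{F}_r$ is a field because $r$ is prime. So $c\equiv (k-l)^{-1}(j-i)$ is the unique solution, and the inner product is exactly $a_ia_j$, whether or not $i=j$.

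The only genuinely substantive point is this second case, where primality of $r$ is needed to get exactly one solution. Otherwise one would get $\gcd(k-l,r)$ or zero solutions. I would state it explicitly as the key use of the hypothesis on $r$.

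I would also remark that the value $r\,a_i^2$ is to be read in $\mathbb{F}_q$. Because $p\nmid r$, this value is nonzero, although the lemma itself does not require that. This nonvanishing is presumably what the subsequent rank computation exploits.
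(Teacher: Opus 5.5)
Your proof is correct and takes essentially the same approach as the paper. The paper's own proof only points to Lemma~1 of the authors' earlier work, which is this same direct support-counting computation, and your write-up supplies the details. Each row of $D^{(s)}$ has a single nonzero entry $a_i$ in column $(i+s)\bmod r$. The inner product therefore reduces to $a_ia_j$ times the number of solutions $c$ of $(k-l)c\equiv j-i \pmod r$, and primality of $r$ gives exactly one solution when $k\neq l$.
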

\begin{proof} The lemma can be proved using the approach outlined in Lemma 1 of \cite{kumar2024entanglement}.
\end{proof}
\begin{remark}\label{re1}
	The above lemma also applies to any block matrix belonging to the class $\mathcal{C}_{A}$.
\end{remark}
\begin{theorem}\label{theorem1}
	Let $H$ be a submatrix of $\mathbf{H}$ consisting distinct $k$ $(1\leq k\leq r)$ block-rows. Then, $\mathrm{gfrank}(H)=r+(k-1)(r-1)$ over $\mathbb{F}_{q}$.
\end{theorem}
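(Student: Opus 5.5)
We will compute the rank by diagonalizing the circulant structure with a discrete Fourier transform over an extension of $\mathbb{F}_{q}$. The hypothesis $p\nmid r$ is what makes that transform available, and the structure $h_{i,j}=D^{(ij \bmod r)}$ makes each eigencomponent a Vandermonde system. Let the chosen block-rows be indexed by distinct $s_1,\ldots,s_k\in\mathbb{Z}_r$.

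\emph{Step 1 (removing the scalars).} Row $i$ of $D^{(s)}$ has the single nonzero entry $a_i$, in column $i+s \pmod r$. So $D^{(s)}=D^{(0)}P^{s}$, where $P$ is the $r\times r$ cyclic shift permutation matrix. Hence $H=(I_k\otimes D^{(0)})\,[P^{s_\ell j}]_{\ell,j}$. Since $I_k\otimes D^{(0)}$ is invertible, $\mathrm{gfrank}(H)$ equals the rank of the $0/1$ block matrix $B=[P^{s_\ell j}]_{1\le \ell\le k,\,0\le j\le r-1}$. This is exactly the rescaling $a_i^{-1}R_i$ that underlies \eqref{sumofrowzero} and Lemma~\ref{lemma1}.

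\emph{Step 2 (diagonalization).} Since $p\nmid r$, the polynomial $x^r-1$ is separable over $\mathbb{F}_q$. Choose an extension $\mathbb{F}_{q^e}$ containing a primitive $r$-th root of unity $\omega$. The Fourier matrix $F=[\omega^{uv}]$ is invertible (its determinant is Vandermonde in distinct $\omega^u$), and $FPF^{-1}=\mathrm{diag}(1,\omega,\ldots,\omega^{r-1})$. Rank is unchanged under field extension. We then conjugate every block by $F$, i.e., multiply $B$ on the left by $I_k\otimes F$ and on the right by $I_r\otimes F^{-1}$. The result has blocks $\mathrm{diag}(\omega^{t s_\ell j})_{t}$. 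After permuting rows and columns according to the eigen-index $t$, the matrix splits as a direct sum over $t\in\mathbb{Z}_r$ of $k\times r$ matrices $V_t=[\omega^{t s_\ell j}]_{\ell,j}$. Thus $\mathrm{rank}(B)=\sum_t \mathrm{rank}(V_t)$.

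\emph{Step 3 (counting).} For $t=0$ all rows of $V_0$ equal the all-ones vector, so the rank is $1$. This reflects the all-ones vector shared by all block-rows, i.e., the $k-1$ dependencies. For $t\neq 0$, the row of $V_t$ indexed by $\ell$ is $(x_\ell^{j})_{j=0}^{r-1}$ with $x_\ell=\omega^{t s_\ell}$. Because $r$ is prime and $t\neq0$, the map $s\mapsto ts$ is a bijection of $\mathbb{Z}_r$. The $x_\ell$ are therefore distinct, and since $k\le r$, the first $k$ columns form a nonsingular Vandermonde matrix, so $\mathrm{rank}(V_t)=k$. Summing gives $1+(r-1)k=r+(k-1)(r-1)$.

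The main obstacle is making Step 2 rigorous. One must justify the existence of $\omega$ via $p\nmid r$, and then check carefully that the block conjugation really produces the claimed direct-sum decomposition. The rest is routine. As a sanity check, an elementary route gives the upper bound directly: every block-row, after scaling row $i$ by $a_i^{-1}$, sums to the all-ones vector, which yields $k-1$ independent relations. The Fourier argument is what supplies the matching lower bound.
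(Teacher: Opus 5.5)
Your proof is correct, and it takes a genuinely different route from the paper.

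The paper works entirely inside $\mathbb{F}_q$. It gets the upper bound from \eqref{sumofrowzero}: for every block-row, $\sum_u a_u^{-1}R_u$ is the same all-ones vector, so the last row of each block-row after the first can be discarded. It gets the lower bound by pairing a vanishing combination of the remaining $r+(k-1)(r-1)$ rows with each of those rows via Lemma~\ref{lemma1}. This amounts to showing that the Gram matrix of those rows is nonsingular. After removing the $a_i$, that matrix has $rI$ diagonal blocks and all-ones off-diagonal blocks. The paper only asserts its determinant is nonzero; a Schur-complement computation shows it is a power of $r$, hence nonzero because $p\nmid r$.

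Your Fourier block-diagonalization gives both bounds in one stroke. It also avoids the Gram-matrix test, which over a finite field is only a sufficient criterion for independence. It shows exactly where each hypothesis enters: $p\nmid r$ gives $\omega$ and an invertible $F$, primality of $r$ makes the nodes $\omega^{ts_\ell}$ distinct, and $k\le r$ gives the nonsingular Vandermonde minor. It also identifies the $t=0$ component as the sole source of the $k-1$ dependencies.

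The cost is passing to an extension field $\mathbb{F}_{q^e}$ and not exhibiting an explicit row basis over $\mathbb{F}_q$. The paper's inner-product computations, by contrast, are the same ones it reuses for $H_1H_2^{T}$ and $HH^{T}$ in Theorems~\ref{theorem3} and~\ref{theorem4}.

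One cosmetic slip: with $P$ chosen so that $D^{(s)}=D^{(0)}P^{s}$ (i.e.\ $P_{u,u+1}=1$), one gets $FPF^{-1}=\mathrm{diag}(\omega^{-t})_{t}$ rather than $\mathrm{diag}(\omega^{t})_{t}$. This is harmless, since $\omega^{-1}$ is also a primitive $r$-th root of unity.
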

\begin{proof} Suppose  $k = 1.$ In this case, $\mathrm{gfrank}_{q}(H) = r$, as the matrix is already in row echelon form.

Next, consider the case when \( k = 2 \). Subtracting the first block-row from the second and applying \eqref{sumofrowzero} results in the last row of the second block-row becoming zero, i.e., any two block-rows  of $H$ are linearly dependent.

Now, consider the $k$ block-rows of $H$, and remove the last row from each block-row except the first. Let
\begin{equation} \label{linearcombi}
			\sum_{i_{1}=0}^{r-1}\alpha_{i_{1}}^{(0)}R^{(0)}_{i_{1}}+\sum_{i_{2}=0}^{r-2}\alpha_{i_{2}}^{(1)}R^{(1)}_{i_{2}}+\cdots+\sum_{i_{k}=0}^{r-2}\alpha_{i_{k}}^{(k-1)}R^{(k-1)}_{i_{k}}=0,
		\end{equation}
        where coefficients are from the field $\mathbb{F}_{q}$. 
        
        Define the inner product $\langle,\rangle:\mathbb{F}_{q}^{r^{2}}\times \mathbb{F}_{q}^{r^{2}}\to \mathbb{F}_{q}$, as follows:		\begin{equation}\label{euclideanF}		\langle(v_{1},\ldots,v_{r^{2}}),(u_{1},\ldots,u_{r^{2}})\rangle=\sum_{i=1}^{r^{2}}v_{i}u_{i}.
	\end{equation}
Taking the inner product, as defined by \eqref{euclideanF}, of \eqref{linearcombi} with each of the remaining rows from all the block-rows  $R^{(0)}, R^{(1)}, \dots, R^{(k-1)}$, and applying Lemma 1, results in the following system of linear equations:	
	\begin{align*}
	& r\alpha_{j_{1}}^{(0)}(a_{j_{1}})^{2}+\sum\limits_{i_{2}=0}^{r-2}\alpha_{i_{2}}^{(1)}a_{i_{2}}a_{j_{1}}+\cdots+\sum\limits_{i_{k}=0}^{r-2}\alpha_{i_{k}}^{(k-1)}a_{i_{k}}a_{j_{1}}=0,\\
    &\text{ for all } 0\leq j_{1}\leq r-1;\\
	&\sum\limits_{i_{1}=0}^{r-1}\alpha_{i_{1}}^{(0)}a_{i_{1}}a_{j_{2}}+r\alpha_{j_{2}}^{(1)}(a_{j_{2}})^{2}+\cdots+\sum\limits_{i_{k}=0}^{r-2}\alpha_{i_{k}}^{(k-1)}a_{i_{k}}a_{j_{2}}=0,\\
    &\text{ for all } 0\leq j_{2}\leq r-2;\\
	&\hspace{44.5mm}\vdotswithin{=}\\	
	&\sum\limits_{i_{1}=0}^{r-1}\alpha_{i_{1}}^{(0)}a_{i_{1}}a_{j_{k}}+\sum\limits_{i_{2}=0}^{r-2}\alpha_{i_{2}}^{(1)}a_{i_{2}}a_{j_{k}}+\cdots+r\alpha_{j_{k}}^{(k-1)}(a_{j_{k}})^{2}=0,\\
    &\text{ for all } 0\leq j_{k}\leq r-2.
	\end{align*}	
    Let  $X^{(i)}=[\alpha^{(i)}_{0},\alpha^{(i)}_{1},\ldots,\alpha^{(i)}_{(r-2)}]^{T}$, for all {\small$1\leq i\leq k-1$} and $X^{(0)}=[\alpha^{(0)}_{0},\alpha^{(0)}_{1},\ldots,\alpha^{(0)}_{(r-1)}]^{T}$, and let
    \begin{align*}
        A&=[a_{ij}], \text{where } a_{ij}=ra_{i}^{2}\delta_{ij}, 0\leq i,j\leq r-1;\\
        B&=[b_{ij}], \text{where } b_{ij}=ra_{i}^{2}\delta_{ij}, 0\leq i,j\leq r-2;\\
        C&=[c_{ij}], \text{where }  c_{ij}=a_{i}a_{j},  0\leq i\leq r-1, 0\leq j\leq r-2;\\
        D&=[d_{ij}], \text{where }  d_{ij}=a_{i}a_{j}, 0\leq i\leq r-2,0\leq j\leq r-1;\\
        E&=[e_{ij}], \text{where } e_{ij}=a_{i}a_{j}, 0\leq i,j\leq r-2.
    \end{align*}
   The above system of linear equations can be written as follows:
   {\small \begin{equation}
       \begin{bmatrix}
           A&C&C&\cdots&C\\
           D&B&E&\cdots&E\\
           D&E&B&\cdots&E\\
           \vdots&\vdots&\vdots&\ddots&\vdots\\
           D&E&E&\cdots&B
       \end{bmatrix}\begin{bmatrix}
           X^{(0)}\\
           X^{(1)}\\
           X^{(2)}\\
           \vdots\\
            X^{(k-1)}
       \end{bmatrix}=\begin{bmatrix}
           0\\
           0\\
           0\\
           \vdots\\
            0
       \end{bmatrix}.
   \end{equation}}
	It is straightforward to check that 
   {\small \begin{equation}\label{deteq}
        \begin{vmatrix}
           A&C&C&\cdots&C\\
           D&B&E&\cdots&E\\
           D&E&B&\cdots&E\\
           \vdots&\vdots&\vdots&\ddots&\vdots\\
           D&E&E&\cdots&B 
        \end{vmatrix}= a_{r-1}\prod_{i=0}^{r-2}a_{i}^{k}\begin{vmatrix}
           A'&C'&C'&\cdots&C'\\
           D'&B'&E'&\cdots&E'\\
           D'&E'&B'&\cdots&E'\\
           \vdots&\vdots&\vdots&\ddots&\vdots\\
           D'&E'&E'&\cdots&B' 
        \end{vmatrix},
    \end{equation}}
    where
    \vspace{-0.25cm}
    \begin{align*}
         A'&=[a'_{ij}], \text{where } a'_{ij}=ra_{i}\delta_{ij}, 0\leq i,j\leq r-1;\\
          B'&=[b'_{ij}], \text{where } b'_{ij}=ra_{i}\delta_{ij},0\leq i,j\leq r-2;\\
           C'&=[c'_{ij}], \text{where } c'_{ij}=a_{j},0\leq i\leq r-1,0\leq j\leq r-2;\\
             D'&=[d'_{ij}], \text{where } d'_{ij}=a_{j}, 0\leq i\leq r-2,0\leq j\leq r-1;\\
                E'&=[e'_{ij}], \text{where } e'_{ij}=a_{j},  0\leq i,j\leq r-2.
    \end{align*}
By applying elementary row operations, we can verify that the determinant on the right-hand side of \eqref{deteq} is nonzero. This implies that the system of linear equations has only the zero vector as a solution, completing the proof.\end{proof}
\begin{corollary}\label{cor1}
	Let $B$ be any submatrix of $\mathbf{A}_{M}$ 
with distinct $\lambda$ $(\lambda\leq r)$ block-rows. Then, $\mathrm{gfrank}_{q}(B)=r+(\lambda-1)(r-1)$.
\end{corollary}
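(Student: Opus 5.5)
The plan is to repeat the proof of Theorem~\ref{theorem1} almost verbatim, with $\mathbf{H}$ replaced by $\mathbf{A}_{M}$. By Remark~\ref{re1}, Lemma~\ref{lemma1} holds for every block matrix in $\mathcal{C}_{A}$. The argument for $\mathbf{H}$ used only three facts:
\begin{enumerate}[label=(\roman*)]
\item each block-row is a row of $D^{(\cdot)}$ blocks of full rank $r$;
\item the difference of two block-rows satisfies \eqref{sumofrowzero};
\item the Gram-type inner products of Lemma~\ref{lemma1}.
\end{enumerate}
So I will first check that all three hold for an arbitrary $\mathbf{A}_{M}$, $M\in C_{M}$.

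\textbf{Upper bound.} Fix any $\lambda$ distinct block-rows of $\mathbf{A}_{M}$. Take the first chosen block-row as a reference. Each block is a monomial-type matrix with nonzero entries, so the reference block-row has rank $r$.

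For any other chosen block-row, subtract the reference block-row from it. Each block of the difference is $D^{(b_{i,j})}-D^{(b_{i',j})}$. By \eqref{sumofrowzero}, the rows of such a difference satisfy the fixed relation $\sum_k a_k^{-1}R_k=\mathbf{0}$. The coefficients $a_k^{-1}$ are the same for every block, so the relation holds across the whole block-row. Hence each additional block-row contributes at most $r-1$ new dimensions, which gives $\mathrm{gfrank}_{q}(B)\le r+(\lambda-1)(r-1)$.

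\textbf{Lower bound.} Keep all $r$ rows of the reference block-row and the first $r-1$ rows of each other block-row. Suppose a linear combination of these rows vanishes, as in \eqref{linearcombi}. Taking inner products with each retained row and applying Lemma~\ref{lemma1} (valid here by Remark~\ref{re1}) yields exactly the same block system as in the proof of Theorem~\ref{theorem1}, with blocks $A,B,C,D,E$.

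This coefficient matrix depends only on $r$, $\lambda$ and the $a_i$, not on $M$. Its determinant factors as in \eqref{deteq}, and we already know it is nonzero. So the only solution is trivial, the retained rows are independent, and the rank is at least $r+(\lambda-1)(r-1)$.

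\textbf{Main obstacle.} The delicate step is justifying that Lemma~\ref{lemma1} really holds for every $M\in C_{M}$. The off-diagonal value $a_ia_j$ requires that, for two distinct block-rows $k\neq l$, the shifts $b_{k,j}-b_{l,j}=(k_k-k_l)x_j$ run over all residues mod $r$ exactly once as $j$ ranges over the block-columns. This is true because $x_0=0$, the $x_j$ are distinct, and $k_k-k_l\neq0$, so the map $x\mapsto(k_k-k_l)x$ is a bijection of $\mathbb{F}_r$. (The first row of $M$ has scalar $0$ and the second has scalar $1$, so all row scalars are distinct.)

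Therefore rows $i$ and $j$ from different block-rows share exactly one nonzero coordinate, contributing $a_ia_j$. I will write out this bijection explicitly, since it is the structural property underlying Remark~\ref{re1}.

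Finally, I will note that the diagonal value $r a_i^2$ is nonzero because $p\nmid r$. This is needed for the nonvanishing of the determinant in \eqref{deteq}.
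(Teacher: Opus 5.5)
Your argument is correct, but it is not the paper's route. The paper does not repeat the Gram-matrix argument. It notes that the model matrix of $B$ can be transformed, by reordering columns (and rows if desired), into the model matrix of $\lambda$ distinct block-rows of $\mathbf{H}$. Reordering block-columns is just a column permutation of $B$, so the rank is unchanged and Theorem~\ref{theorem1} applies as a black box.

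The fact you isolate as your main obstacle is exactly what makes this reduction work: $b_{i,j}=k_ix_j$ with distinct $k_i$ and $\{x_0,\dots,x_{r-1}\}=\mathbb{F}_r$. Sorting the block-columns so that $x_j=j$ turns $B$ literally into block-rows $k_1,\dots,k_\lambda$ of $\mathbf{H}$. It also gives Remark~\ref{re1} for free, since column permutations preserve inner products of rows.

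So the paper's proof is shorter and needs only the statement of Theorem~\ref{theorem1}. Yours needs an internal step of that proof: the nonvanishing of the Gram determinant. The paper only asserts this, and it does not follow from the rank statement, because over $\mathbb{F}_q$ independent rows can have a singular Gram matrix. You can make your version self-contained. After scaling row $u$ by $a_u^{-1}$, the Gram matrix of the retained rows is $rI+J-\Delta$, where $J$ is the all-ones matrix and $\Delta$ is block-diagonal with all-ones blocks of sizes $r,r-1,\dots,r-1$. Its determinant over $\mathbb{Z}$ is $r^{\lambda(r-2)+2}$, which is nonzero in $\mathbb{F}_q$ because $p\nmid r$.

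In return, your argument is more general. It only uses that, for any two distinct block-rows, the shift differences run over all residues mod $r$ exactly once. It therefore yields the rank $r+(\lambda-1)(r-1)$ for every such array of $D$-blocks, including arrays that are not block-column permutations of part of $\mathbf{H}$.
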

\begin{proof} Let $B$ be any submatrix of $\mathbf{A}_{M}$ consisting of $\lambda$ block-rows, and let
$K = [k_{ij}]_{0 \leq i \leq \lambda - 1,\ 0 \leq j \leq r - 1}$
be the corresponding model matrix of $B$. By applying elementary row and column operations to the model matrix $K$, we can obtain a new model matrix that corresponds to a submatrix of $\mathbf{H}$ comprising $\lambda$ distinct block-rows. This completes the proof.\end{proof}
\begin{remark}\label{remark.2}
   As observed earlier, the matrix $\mathbf{A}_{M}$ defined in \eqref{A} is rank-deficient. However, this deficiency can be avoided if the nonzero entries of $\mathbf{A}_{M}$ are chosen arbitrarily. In other words, when each $D^{(i)}$ is allowed to take distinct nonzero values for different indices $i$, the matrix may attain full rank, as illustrated by the following example over $\mathbb{F}_{11}$:
\begin{equation*}
H =
\begin{bmatrix}
4 & 0 & 0 & 1 & 0 & 0 & 2 & 0 & 0 \\
0 & 5 & 0 & 0 & 5 & 0 & 0 & 4 & 0 \\
0 & 0 & 6 & 0 & 0 & 1 & 0 & 0 & 5 \\
4 & 0 & 0 & 0 & 0 & 7 & 0 & 2 & 0 \\
0 & 3 & 0 & 5 & 0 & 0 & 0 & 0 & 7 \\
0 & 0 & 3 & 0 & 4 & 0 & 5 & 0 & 0 \\
4 & 0 & 0 & 0 & 6 & 0 & 0 & 0 & 1 \\
0 & 5 & 0 & 0 & 0 & 7 & 2 & 0 & 0 \\
0 & 0 & 2 & 6 & 0 & 0 & 0 & 1 & 0
\end{bmatrix}.
\end{equation*}
The significance of this special case lies in its ability to facilitate an easy determination of the exact rank of $\mathbf{A}_{M}$ and number of maximally entangled bits required for constructing entanglement-assisted quantum codes. In contrast, when the non-binary entries of $\mathbf{A}_{M}$ are populated randomly, the resulting matrix may become full rank. In such cases, determining the exact rank and the corresponding minimum number of required ebits analytically becomes non-trivial.
\end{remark}
\section{Non-binary EA-QC-QLDPC Codes}\label{Sec.3}
In this section, using \cite[Theorem 4]{galindo2019entanglement}, we construct two families of non-binary EA-QC-QLDPC codes. One of these families requires a single \textit{ebit}, and unassisted portion of its overall Tanner graph has a girth $>4$

Let $H_1$ and $H_2$ be submatrices of $\mathbf{A}_M$ in \eqref{A}, consisting of $\ell_1\geq1$ and $\ell_2\geq1$ block-rows from $\mathbf{A}_M$, respectively, s.t., $H_1$ and $H_2$ have no block-rows in common, and all block-rows within each matrix are distinct.
\begin{theorem}\label{theorem3}
	Let $\mathcal{C}_{1}$ and $\mathcal{C}_{2}$ be two non-binary classical QC-LDPC codes, and let $H_{1}$ and $H_{2}$ be their parity-check matrices, respectively. There exists a non-binary EA-QC-QLDPC code with the parameters $[[r^{2}, r^{2}-2r-(r-1)(\ell_{1}+\ell_{2}-2)+1;1]]_q$. Further, unassisted portion of the overall  Tanner graph of non-binary EA-QC-QLDPC code has girth greater than 4.
\end{theorem}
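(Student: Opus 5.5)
The plan is to apply the entanglement-assisted CSS construction of \cite[Theorem 4]{galindo2019entanglement}. Given classical parity-check matrices $H_1$ and $H_2$ over $\mathbb{F}_q$ of common length $n$, that result yields an EA code with parameters $[[n,\ n-\mathrm{gfrank}_q(H_1)-\mathrm{gfrank}_q(H_2)+c;\ c]]_q$, where $c=\mathrm{gfrank}_q(H_1H_2^{T})$. Here $n=r^2$, since each block-row of $\mathbf{A}_M$ has $r$ blocks of size $r\times r$. So the proof reduces to computing three ranks and checking the girth claim.

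\textbf{Step 1 (classical ranks).} By Corollary~\ref{cor1}, $\mathrm{gfrank}_q(H_1)=r+(\ell_1-1)(r-1)$ and $\mathrm{gfrank}_q(H_2)=r+(\ell_2-1)(r-1)$. Substituting,
\[
n-\mathrm{gfrank}_q(H_1)-\mathrm{gfrank}_q(H_2)+c=r^2-2r-(r-1)(\ell_1+\ell_2-2)+c .
\]
This matches the claimed dimension exactly when $c=1$.

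\textbf{Step 2 (the key step: $\mathrm{gfrank}_q(H_1H_2^T)=1$).} I would compute $H_1H_2^T$ blockwise, which is essentially the content of Lemma~\ref{lemma1} and Remark~\ref{re1}.

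First, the row of $D^{(i)}$ indexed by $k$ has its single nonzero entry $a_k$ in column $k+i \pmod r$. Hence the $(k,l)$ entry of $D^{(i)}(D^{(j)})^T$ is $a_ka_l$ if $k+i\equiv l+j \pmod r$, and $0$ otherwise.

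Now take block-row $u$ of $H_1$ and block-row $v$ of $H_2$; these are distinct block-rows of $\mathbf{A}_M$, since $H_1$ and $H_2$ share none. The $(u,v)$ block of $H_1H_2^T$ has $(k,l)$ entry equal to
\[
a_ka_l\cdot\#\{s:\ b_{u,s}-b_{v,s}\equiv l-k \pmod r\}.
\]
By the construction of $M$, $b_{u,s}-b_{v,s}=(k_u-k_v)x_s$ for $s\ge1$ and $b_{u,0}-b_{v,0}=0$, where $k_u\neq k_v$ are the row multipliers. Since $k_u\neq k_v$ and the $x_s$ are the distinct nonzero elements of $\mathbb{F}_r$, these $r$ differences run over every residue mod $r$ exactly once. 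So every entry of every block equals $a_ka_l$.

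Therefore $H_1H_2^T = \mathbf{w}\mathbf{w}'^{T}$, where $\mathbf{w}=(a_0,\dots,a_{r-1})$ is repeated $\ell_1$ times and $\mathbf{w}'$ is the same vector repeated $\ell_2$ times. This is a rank-one matrix, and it is nonzero because all $a_k\neq0$. Hence $c=1$: a single ebit suffices, and the claimed parameters follow from Step 1. This counting step is where the actual content lies; the main thing to verify carefully is that the block-row disjointness of $H_1,H_2$ is exactly what guarantees $k_u\neq k_v$.

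\textbf{Step 3 (girth).} The unassisted portion of the overall Tanner graph is built from $H_1$ and $H_2$, which are submatrices of $\mathbf{A}_M$. As shown before Lemma~\ref{lemma1}, any two rows of $M$ have a difference vector with all entries distinct mod $r$, so by Corollary~\ref{4-cyclecoro} there are no 4-cycles within $H_1$ or within $H_2$. The combined check structure $[H_1\ 0;\ 0\ H_2]$ in the CSS form inherits this, because its two parts act on disjoint halves of the symplectic coordinates. Therefore the girth is greater than 4.
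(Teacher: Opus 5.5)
Your Steps 1 and 2 are correct and follow the paper's route. The paper also applies \cite[Theorem 4]{galindo2019entanglement} and takes the classical ranks from Corollary~\ref{cor1}. It then uses Lemma~\ref{lemma1} and Remark~\ref{re1} to see that every block of $H_1H_2^T$ equals $\Theta$, and row-reduces to rank one. Your direct count and the factorization $H_1H_2^T=\mathbf{w}\mathbf{w}'^T$ are the same computation made explicit.

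The gap is in Step 3. In the paper, the unassisted portion of the overall Tanner graph is the Tanner graph of the stacked matrix $[H_1^T\,|\,H_2^T]^T$. The checks coming from rows of $H_1$ and from rows of $H_2$ are attached to the same $r^2$ qudit nodes. The 4-cycles at issue are therefore the cross ones: a row of $H_1$ and a row of $H_2$ sharing two qudit positions. This is exactly how orthogonality $H_1H_2^T=0$ forces 4-cycles in unassisted CSS codes, since two overlapping rows over $\mathbb{F}_q$ with zero inner product must share at least two positions. That obstruction is the stated motivation for the construction.

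Your claim that the two parts ``act on disjoint halves of the symplectic coordinates'' puts these checks in disconnected graphs, so it never rules out cross 4-cycles. Under that reading, orthogonality could never create 4-cycles at all. So Step 3, as written, proves a different and essentially vacuous statement, not the one in the theorem.

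The missing fact is already in your Step 2. For a block-row $u$ of $H_1$ and a block-row $v$ of $H_2$, the differences $b_{u,s}-b_{v,s}$ are pairwise distinct mod $r$, so Corollary~\ref{4-cyclecoro} applies to the pair $(u,v)$ as well. Equivalently, your count $\#\{s: b_{u,s}-b_{v,s}\equiv l-k \pmod r\}=1$ says that each row of $H_1$ meets each row of $H_2$ in exactly one position. Combined with the pairs inside $H_1$ and inside $H_2$, this shows the stacked matrix is 4-cycle-free. That matrix is just a set of distinct block-rows of $\mathbf{A}_M$, which is how the paper concludes.
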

\begin{proof} By Remark~\ref{re1}, the 
required entangled bits \cite{galindo2019entanglement} are 
\begin{equation}\label{ebitseq}
    c=\text{gfrank}_{q}(H_{1}H_{2}^{T})=
    \text{gfrank}_{q}([T_{ij}]_{1\leq i\leq \ell_{1}, 1\leq j\leq \ell_{2}}),
\end{equation}
where
\begin{equation*}
   T_{ij}=\Theta= \begin{bmatrix}
        a_{0}a_{0} & a_{0}a_{1} & \cdots & a_{0}a_{(r-1)}\\
        a_{1}a_{0} & a_{1}a_{1}& \cdots & a_{1}a_{(r-1)}\\
        \vdots & \vdots& \ddots & \vdots\\
        a_{(r-1)}a_{0} & a_{(r-1)}a_{1}& \cdots & a_{(r-1)}a_{(r-1)}
    \end{bmatrix},
\end{equation*}
for all $1\leq i\leq \ell_{1}$ and $1\leq j\leq \ell_{2}$.
To compute the $\mathrm{gfrank}_{q}$ of the block matrix $[T_{ij}]$ in \eqref{ebitseq}, we perform the following elementary row operations: for all $2 \leq i \leq \ell_1$ and $1\leq j\leq \ell_{2}$, replace $T_{ij}$ with $T_{ij} - T_{1j}$. Next, for each $j$, in the block $T_{1j}$, replace the $k^{\text{th}}$ row (for $1 \leq k \leq r-1$) by $R_k \leftarrow R_k - a_k a_0^{-1} R_0$, where $a_0^{-1}$ is the multiplicative inverse of $a_0$ in the finite field $\mathbb{F}_q$. Consequently, we have $\mathrm{gfrank}_{q}(H_1 H_2^T) = 1$. By Corollary~\ref{cor1}, the code dimension follows directly. By design the matrix $[H_{1}^{T}|H_{2}^{T}]^{T}$ is the submatrix of \( \mathbf{A}_M \) which is $4$ cycles free, completing the proof.
\end{proof}
\begin{example}
  For $r=7$, let $D^{(0)}=\mathrm{diag}\{a_{0},a_{1},\ldots,a_{6}\}$, where $a_{i}\in \mathbb{F}_{8},$ for all $i$. Let $H_{1}= [D^{\mathrm{mod}(ij, 7)}]_{0\leq i\leq 2,0\leq j\leq 6}$ and  $H_{2}=[D^{\mathrm{mod}(ij, 7)}]_{4\leq i\leq 6,0\leq j\leq 6}$
be two parity-check matrices of non-binary classical QC-LDPC codes $\mathcal{C}_{1}$ and $\mathcal{C}_{2}$, respectively. By Theorem~\ref{theorem3}, there exists a non-binary EA-QC-QLDPC code with parameters $[[49,12;1]]_8$.
\end{example}
Next, we turn our attention to the construction of non-binary entanglement-assisted QC-QLDPC codes derived from a single classical non-binary QC-LDPC code. Before presenting the main result, we establish a lemma that is essential for its proof.
\begin{lemma}\label{L}
Let $r$ be an odd prime, and let \begin{equation*}
   L= \begin{bmatrix}
        (r-1)a_{0} & -a_{1}& \cdots & -a_{r-1}\\
        -a_{0} & (r-1)a_{1}& \cdots & -a_{r-1}\\
        \vdots & \vdots&\vdots& \ddots & \vdots\\
         -a_{0} & -a_{1}& \cdots & (r-1)a_{r-1}
    \end{bmatrix},
\end{equation*}  where $a_{i}\in\mathbb{F}_{q}$ for all $i$. Then, $\mathrm{gfrank}_{q}(L) =(r-1)$.
\end{lemma}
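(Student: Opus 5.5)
The natural route is to factor $L$ as $L=(rI-J)\,D^{(0)}$, where $I$ is the $r\times r$ identity, $J$ is the all-ones matrix, and $D^{(0)}=\mathrm{diag}\{a_{0},\ldots,a_{r-1}\}$. Indeed, the $(i,j)$ entry of $(rI-J)D^{(0)}$ is $(r\delta_{ij}-1)a_{j}$. This equals $(r-1)a_{i}$ on the diagonal and $-a_{j}$ off it, which is exactly $L$. As everywhere in the paper, the $a_{i}$ are nonzero entries of $D^{(0)}$, so $D^{(0)}$ is invertible over $\mathbb{F}_{q}$. Hence $\mathrm{gfrank}_{q}(L)=\mathrm{gfrank}_{q}(rI-J)$, and the lemma reduces to showing that $rI-J$ has rank $r-1$ over $\mathbb{F}_{q}$.

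\textbf{Upper bound.} The rows of $rI-J$ sum to $r\mathbf{1}-r\mathbf{1}=\mathbf{0}$, so the rank is at most $r-1$. The same fact can be read directly from $L$: each column of $L$ sums to $(r-1)a_{j}-(r-1)a_{j}=0$. This is the analogue of \eqref{sumofrowzero}.

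\textbf{Lower bound.} I will show that the leading $(r-1)\times(r-1)$ principal submatrix $rI_{r-1}-J_{r-1}$ is nonsingular. Its eigenvalues over the algebraic closure are $r$, with multiplicity $r-2$, and $r-(r-1)=1$, once. Equivalently, by the matrix determinant lemma,
\begin{equation*}
\det(rI_{r-1}-J_{r-1})=r^{r-2}\bigl(1-\tfrac{r-1}{r}\bigr)=r^{r-3}.
\end{equation*}
For $r=3$ this equals $1$ and no condition is needed. In general, $r$ is nonzero in $\mathbb{F}_{q}$ because the paper assumes $p\nmid r$, so $r^{r-3}\neq 0$ in $\mathbb{F}_{q}$. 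This gives rank at least $r-1$.

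Alternatively, one can row-reduce $L$ directly: subtract the last row from each other row, getting $r a_{i}e_{i}-r a_{r-1}e_{r-1}$, and use $r\neq0$. The main point of care is the characteristic. The argument needs $r$ invertible in $\mathbb{F}_{q}$ and all $a_{i}\neq 0$; if $p\mid r$ the rank would drop, since $L$ would become $-JD^{(0)}$ of rank $1$. I will therefore explicitly invoke the standing hypothesis $p\nmid r$ and the nonzero choice of the $a_{i}$.
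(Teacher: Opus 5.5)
Your argument is correct in substance, but the determinant computation contains a slip. By the matrix determinant lemma, $\det(rI_{r-1}-J_{r-1})=\det(rI_{r-1})\bigl(1-\tfrac{r-1}{r}\bigr)=r^{r-1}\cdot\tfrac{1}{r}=r^{r-2}$, not $r^{r-3}$. This is also what your own eigenvalue count gives ($r$ with multiplicity $r-2$, and $1$ once).

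As a result, your remark that for $r=3$ ``no condition is needed'' is false. The minor is $\det\begin{bmatrix}2&-1\\-1&2\end{bmatrix}=3$, which vanishes in characteristic $3$. Your own closing observation, that $p\mid r$ forces $L=-JD^{(0)}$ to have rank $1$, already contradicts that remark. Since you invoke $p\nmid r$ uniformly anyway, the proof goes through once you correct the exponent to $r^{r-2}$ and delete the $r=3$ sentence.

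Your route differs from the paper's. The paper row-reduces $L$ directly through $L_1,\dots,L_4$:
\begin{enumerate}
\item it subtracts the row of index $1$ from each row $i\ge 2$, giving $r(a_ie_i-a_1e_1)$;
\item it scales those rows by $r^{-1}$;
\item it adds them to the first two rows, turning these into multiples of $(-a_0,a_1,0,\ldots,0)$;
\item it does one more elimination and reads off the rank of $L_4$.
\end{enumerate}
There the upper bound is only implicit, in the proportionality of the first two rows of $L_4$.

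Your factorization $L=(rI-J)D^{(0)}$ is cleaner. It separates the role of $a_i\neq 0$ (an invertible column scaling) from the role of $p\nmid r$ (nonsingularity of a principal minor of $rI-J$). It also gives the upper bound explicitly, via the vanishing column sums. Your alternative row reduction is essentially the paper's first step with the last row as pivot, and it already finishes the proof without the remaining manipulations.

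Making both hypotheses explicit is a genuine gain, since neither appears in the lemma's statement. The paper's closing remark, that the conclusion holds ``whether $p$ divides $r$ or not'', is really a statement about $L_4$. But $L_4$ is row-equivalent to $L$ only when $r$ is invertible in $\mathbb{F}_q$, because the step $L_1\to L_2$ uses $r^{-1}$. As you observe, when $p\mid r$ the rank of $L$ is $1$.
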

\begin{proof} For $0\leq i\leq (r-1)$, let $R_{i}$ denote $i^{\text{th}}$ row of $L$. Now, for each $2\leq i\leq (r-1)$, replacing the $i^{\text{th}}$ row $R_{i}$,  by $R_{i}-R_{2}$ in the given matrix $L$ results in the following matrix $L_{1}$:
\begin{equation*}
L_{1} =
\begin{bmatrix}
(r-1)a_{0} & -a_{1} & -a_{2} & -a_{3} & \cdots & -a_{r-1} \\
-a_{0} & (r-1)a_{1} & -a_{2} & -a_{3} & \cdots & -a_{r-1} \\
0 & -ra_{1} & ra_{2} & 0 & \cdots & 0 \\
0 & -ra_{1} & 0 & ra_{3} & \cdots & 0 \\
\vdots & \vdots & \vdots & \vdots & \ddots & \vdots \\
0 & -ra_{1} & 0 & 0 & \cdots & ra_{r-1}
\end{bmatrix}.
\end{equation*}
For each $2\leq i\leq (r-1)$, replacing the $i^{\text{th}}$ row $R_{i}$ by $r^{-1}R_{i}$ in the given matrix $L_{1}$ results in the following matrix $L_{2}$:
\begin{equation*}
L_{2} =
\begin{bmatrix}
(r-1)a_{0} & -a_{1} & -a_{2} & -a_{3} & \cdots & -a_{r-1} \\
-a_{0} & (r-1)a_{1} & -a_{2} & -a_{3} & \cdots & -a_{r-1} \\
0 & -a_{1} & a_{2} & 0 & \cdots & 0 \\
0 & -a_{1} & 0 & a_{3} & \cdots & 0 \\
\vdots & \vdots & \vdots & \vdots & \ddots & \vdots \\
0 & -a_{1} & 0 & 0 & \cdots & a_{r-1}
\end{bmatrix}.
\end{equation*}
Next, by replacing $R_{0}$ and $R_{1}$ in the matrix $L_{2}$ with $R_{0} + \sum_{i=2}^{r-1} R_{i}$ and $R_{1} + \sum_{i=2}^{r-1} R_{i}$, respectively, we obtain
\begin{equation*}
L_{3} =
\begin{bmatrix}
(r-1)a_{0} & -(r-1)a_{1} & 0 & 0 & \cdots & 0 \\
-a_{0} & a_{1} & 0 & 0 & \cdots & 0 \\
0 & -a_{1} & a_{2} & 0 & \cdots & 0 \\
0 & -a_{1} & 0 & a_{3} & \cdots & 0 \\
\vdots & \vdots & \vdots & \vdots & \ddots & \vdots \\
0 & -a_{1} & 0 & 0 & \cdots & a_{r-1}
\end{bmatrix}.
\end{equation*}
For each $2\leq i\leq (r-2)$, replacing $i^{\text{th}}$  row $R_{i}$ in $L_{3}$  by $R_{i}-R_{r-1}$ results in the following matrix:
\begin{equation*}
L_{4} =
\begin{bmatrix}
(r-1)a_{0} & -(r-1)a_{1} & 0 & 0 & \cdots & 0 \\
-a_{0} & a_{1} & 0 & 0 & \cdots & 0 \\
0 & 0 & a_{2} & 0 & \cdots & -a_{r-1}\\
0 & 0 & 0 & a_{3} & \cdots & -a_{r-1} \\
\vdots & \vdots & \vdots & \vdots & \ddots & \vdots \\
0 & -a_{1}& 0 & 0 & \cdots & a_{r-1}
\end{bmatrix}.
\end{equation*}
Whether the characteristic $p$ of field $\mathbb{F}_{q}$ divides the circulant size $r$ or not, in both cases, it is straightforward to see that $\operatorname{gfrank}_{q}(L_{4}) = r - 1$; hence, proved.
\end{proof}
\begin{theorem}\label{theorem4}
Let $C$ be a classical non-binary QC-LDPC code with the parity check matrix $H$ that contains $\ell$ distinct number of block-rows from the matrix $\mathbf{A}_{M}$ defined in \eqref{A} such that $2\ell<r$. Then, there exists a non-binary EA-QC-QLDPC code with parameters $[[r^{2},(r-1)(r-\ell+1);r+(\ell-1)(r-1)]]_q$. Moreover, the Tanner graph of the code $C$ has girth $> 4$.    
\end{theorem}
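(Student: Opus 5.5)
The plan is to use the single-code form of the entanglement-assisted construction of \cite[Theorem 4]{galindo2019entanglement}. There, taking the same matrix $H$ for both the $X$- and $Z$-parts gives an EA code with parameters $[[n,\,n-2\,\mathrm{gfrank}_q(H)+c;\,c]]_q$, where $c=\mathrm{gfrank}_q(HH^{T})$.

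Here $n=r^{2}$, and Corollary~\ref{cor1} gives $\mathrm{gfrank}_q(H)=r+(\ell-1)(r-1)=\ell(r-1)+1$. Suppose we can show that $c=\mathrm{gfrank}_q(HH^T)$ equals this same value $r+(\ell-1)(r-1)$. Then the dimension is $r^2-r-(\ell-1)(r-1)=(r-1)(r-\ell+1)$, as claimed. So the whole proof reduces to the identity $\mathrm{gfrank}_q(HH^T)=\mathrm{gfrank}_q(H)$, together with the girth statement.

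\textbf{Step 1 (structure of $HH^T$).} By Lemma~\ref{lemma1} and Remark~\ref{re1}, $HH^T$ is an $\ell\times\ell$ block matrix with $r\times r$ blocks. Its diagonal blocks are $r\,\mathrm{diag}(a_0^2,\ldots,a_{r-1}^2)$, and every off-diagonal block is $\Theta=\mathbf a\mathbf a^T$, where $\mathbf a=(a_0,\ldots,a_{r-1})^T$. Set $\Delta=I_\ell\otimes\mathrm{diag}(a_0,\ldots,a_{r-1})$, which is invertible. Then
\begin{equation*}
HH^T=\Delta\, G\,\Delta,\qquad G=r\,I_{\ell r}+(J_\ell-I_\ell)\otimes J_r ,
\end{equation*}
so $\mathrm{gfrank}_q(HH^T)=\mathrm{gfrank}_q(G)$.

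\textbf{Step 2 (rank of $G$).} Since $p\nmid r$, we have the direct sum $\mathbb F_q^{r}=\langle\mathbf 1\rangle\oplus W$, where $W=\{w:\sum w_i=0\}$. Hence $\mathbb F_q^{\ell r}=(\mathbb F_q^\ell\otimes\mathbf 1)\oplus(\mathbb F_q^\ell\otimes W)$, and both summands are $G$-invariant.
\begin{itemize}
\item On $\mathbb F_q^\ell\otimes W$, the term $J_r$ annihilates $W$, so $G$ acts as $r\cdot\mathrm{id}$. This is invertible because $p\nmid r$, and contributes rank $\ell(r-1)$.
\item On $\mathbb F_q^\ell\otimes\mathbf 1$, using $J_r\mathbf 1=r\mathbf 1$, we get $G(u\otimes\mathbf 1)=r\,(J_\ell u)\otimes\mathbf 1$. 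This contributes $\mathrm{gfrank}_q(J_\ell)$, which is $1$ when $\ell\neq0$ in $\mathbb F_q$.
\end{itemize}
Therefore $\mathrm{gfrank}_q(G)=\ell(r-1)+1=r+(\ell-1)(r-1)$, as required.

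If one prefers to stay with explicit row operations, in the spirit of the paper, the alternative is to subtract block-rows from one another. This produces blocks of the shape handled by Lemma~\ref{L}, each of rank $r-1$, plus one surviving rank-one contribution. I expect the paper's intended route is this one, and that this is where Lemma~\ref{L} and the hypothesis $2\ell<r$ are used.

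\textbf{Main obstacle.} The delicate point is the rank-one part coming from $J_\ell$. It vanishes if $p\mid\ell$, and then $c$ would drop by one. I would first check whether the hypothesis $2\ell<r$, or the row-operation argument through Lemma~\ref{L}, is meant to exclude this case. Failing that, I would state the side condition $p\nmid\ell$ explicitly. For instance, in characteristic $2$ it is automatic for odd $\ell$.

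\textbf{Step 3 (girth).} $H$ consists of distinct block-rows of $\mathbf A_M$. As shown before Theorem~\ref{theorem1}, any two rows of the model matrix $M$ have a difference vector with pairwise distinct entries modulo $r$. Corollary~\ref{4-cyclecoro} then rules out 4-cycles, so the Tanner graph of $C$ has girth $>4$.
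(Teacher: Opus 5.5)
Your proof is correct. The parameter count (via Corollary~\ref{cor1} and the single-code form of the Galindo et al.\ construction) and the girth argument match the paper's. You compute $\mathrm{gfrank}_q(HH^T)$ by a different route, however.

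The paper uses explicit row operations. For each $v\ge 1$ it replaces $R_u^{(v)}$ by $a_u^{-1}R_u^{(v)}-r^{-1}\sum_{u}a_u^{-1}R_u^{(0)}$; the subtracted vector equals $\mathbf a^{T}$ in every block column. This clears block-row $v$ except its diagonal block, which becomes $L$. The result is block upper triangular, with the invertible $D_r$ in the corner and $\ell-1$ copies of $L$ on the diagonal. Lemma~\ref{L} then obtains $\mathrm{gfrank}_q(L)=r-1$ by a further hand reduction.

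Your factorization $HH^T=\Delta G\Delta$ and the splitting $\mathbb F_q^{r}=\langle\mathbf 1\rangle\oplus W$ do both steps at once. Indeed $L=(rI_r-J_r)\,\mathrm{diag}(a_0,\ldots,a_{r-1})$, and $\mathrm{rank}(rI_r-J_r)=r-1$ is exactly your $W$ versus $\langle\mathbf 1\rangle$ dichotomy applied to one block. Your version is shorter and shows that $p\nmid r$ is the only arithmetic input. It also identifies the extra dimension beyond $\ell(r-1)$ as $\langle\mathbf 1_\ell\otimes\mathbf 1_r\rangle$. The paper's version is fully elementary and matches its other proofs. Neither uses $2\ell<r$ in the rank computation.

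Your ``main obstacle'' is not real, and you should not add the hypothesis $p\nmid\ell$. $J_\ell=\mathbf 1\mathbf 1^{T}$ is a nonzero rank-one matrix over every field. When $p\mid\ell$, what vanishes is its eigenvalue $\ell$ (so $J_\ell$ becomes nilpotent), not its rank. Hence $G(\mathbb F_q^{\ell}\otimes\mathbf 1)=\{r(J_\ell u)\otimes\mathbf 1\}=\langle\mathbf 1_\ell\otimes\mathbf 1_r\rangle$ is one-dimensional in every characteristic. Your Step~2 therefore already proves $\mathrm{gfrank}_q(G)=\ell(r-1)+1$ unconditionally. This is consistent with the paper's row operations, which divide only by $r$ and the $a_u$, never by $\ell$.

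Also, $2\ell<r$ does not exclude $p\mid\ell$. Take $q=2^{m}$, $r=5$, $\ell=2$. Then $G=\begin{bmatrix}I_5&J_5\\J_5&I_5\end{bmatrix}$, and its rank is $5+\mathrm{rank}(I_5-J_5^{2})=5+\mathrm{rank}(I_5+J_5)=5+4=9=\ell(r-1)+1$, with no drop.
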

\begin{proof}
By employing Corollary \ref{cor1}, we obtain $\mathrm{gfrank}_{q}(H)$. Let
\begin{equation*}
    H=\begin{bmatrix}        D^{k_{1}x_{0}}&D^{k_{1}x_{1}}&\cdots&D^{k_{1}x_{(r-1)}}\\
    D^{k_{2}x_{0}}&D^{k_{2}x_{1}}&\cdots&D^{k_{2}x_{(r-1)}}\\
    \vdots&\vdots&\ddots&\vdots\\
    D^{k_{\ell}x_{0}}&D^{k_{\ell}x_{1}}&\cdots&D^{k_{\ell}x_{(r-1)}}
    \end{bmatrix},
\end{equation*}
where $\mathbb{F}_{r}=\{x_{0},x_{1},\ldots,x_{r-1}\}$ and $k_{i}\in\mathbb{F}_{r}$ for all $1\leq i\leq \ell$ and $k_{i}\neq k_{j}$ whenever $i\neq j$. A direct application of Lemma~\ref{lemma1} yields the following
\begin{equation}
   HH^{T}= \begin{bmatrix}
    D_{r}&\Theta&\cdots&\Theta\\
     \Theta&D_{r}&\cdots&\Theta\\
      \vdots&\vdots&\ddots&\vdots\\
       \Theta&\Theta&\cdots&D_{r}
\end{bmatrix}_{(\ell r)\times (\ell r)},~ \text{where}
\end{equation}
 \begin{equation*}
   \Theta= \begin{bmatrix}
        a_{0}a_{0} & a_{0}a_{1} & \cdots & a_{0}a_{(r-1)}\\
        a_{1}a_{0} & a_{1}a_{1}& \cdots & a_{1}a_{(r-1)}\\
        \vdots & \vdots& \ddots & \vdots\\
        a_{(r-1)}a_{0} & a_{(r-1)}a_{1}& \cdots & a_{(r-1)}a_{(r-1)}
    \end{bmatrix}, 
\end{equation*}
\begin{equation*}
   D_{r}= \begin{bmatrix}
        ra_{0}^{2} & 0& \cdots & 0\\
        0 & ra_{1}^{2}& \cdots & 0\\
        \vdots & \vdots& \ddots & \vdots\\
        0 & 0& \cdots & ra_{r-1}^{2}
    \end{bmatrix}.
\end{equation*}
For $0\leq u\leq (r-1)$ and $0 \leq v\leq (\ell-1)$,  $R_{u}^{(v)}$ represents $u^{\mathrm{th}}$ row of $v^{\mathrm{th}}$  block-row of the matrix $HH^{T}$. For each $0\leq u\leq (r-1)$ and $1 \leq v\leq (\ell-1)$, we replace $R_{u}^{(v)}$  by $(a_{u}^{-1}R_{u}^{(v)}-r^{-1}\sum_{u=0}^{(r-1)}a_{u}^{-1}R_{u}^{(0)})$ in the matrix $HH^{T}$ to obtain 
\begin{equation}\label{hht}
  \begin{bmatrix} D_{r}&\Theta&\cdots&\Theta\\    \mathbf{0}&L&\cdots&\mathbf{0}\\  
     \vdots&\vdots&\ddots&\vdots\\
      \mathbf{0}&\mathbf{0}&\cdots&L
\end{bmatrix}_{(\ell r)\times (\ell r)}, 
\end{equation}
where $\mathbf{0}$ represents the $r\times r$ zero matrix and 
 {\small \begin{equation*}
   L= \begin{bmatrix}
        (r-1)a_{0} & -a_{1}& \cdots & -a_{r-1}\\
        -a_{0} & (r-1)a_{1}& \cdots & -a_{r-1}\\
        \vdots & \vdots& \ddots & \vdots\\
         -a_{0} & -a_{1}& \cdots & (r-1)a_{r-1}
    \end{bmatrix}. 
\end{equation*}}
By Lemma \ref{L}, one can conclude that $\mathrm{gfrank}_{q}(HH^{T})$ is $r+(l-1)(r-1)$, since in the block matrix in equation \eqref{hht} the matrix $D_{r}$ is of full rank and the matrix $L$ has gfrank $(r-1)$ over $\mathbb{F}_{q}$. By construction, the Tanner graph of the QC-LDPC code has girth $>4$. This completes the proof.
\end{proof}
\begin{remark} We note the following: 
\begin{enumerate}[leftmargin=*]
   \item The proposed non-binary entanglement-assisted QC-QLDPC codes are based on the diagonal matrix $D^{(0)}=\text{diag}\{a_{0},a_{1},\ldots,a_{n-1}\}, \text{ where } a_{i}\in\mathbb{F}_{q}\setminus\{0\}$ and its circulant shifts. The proposed construction provides a structured framework for generating codes with analysis of their algebraic properties, particularly towards computing the rank of the associated matrices towards exact code rate analysis. 
   \item We can further generalize the proposed ideas to a broader class of non-binary entanglement-assisted QC-LDPC codes, in which each circulant matrix is allowed to have a different set of distinct nonzero entries. Such a generalization offers greater design flexibility and may lead to codes with improved performance characteristics under suitable parameter choices. However, in this more general setting, the rank analysis of the resulting matrices becomes significantly more challenging due to the increased structural complexity and reduced symmetry. A detailed investigation of this case is beyond the scope of the present work and is omitted here.
\end{enumerate}
\end{remark}
\section{Conclusions}\label{Sec.5}
We constructed non-binary QC-LDPC codes by tiling permutation matrices of prime order and  derived the exact code rate. This construction was used to derive two families of non-binary EA-QC-QLDPC codes, one of the families requires only a \textit{single ebit}. We analyzed the rank of the parity-check matrix of proposed non-binary QC-LDPC code that yielded a rank-deficient matrix. The motivation for considering specific non-binary QC-LDPC codes lies in the fact that it allows us to easily determine the exact code rate and ebits required to construct entanglement-assisted quantum codes. On the other hand, populating non-binary entries in the non-binary QC parity-check matrix may result in a full-rank matrix, as given in Remark \ref{remark.2}. Obtaining the exact rank and optimal number of ebits for this case is non-trivial and is part of future work.
\section*{Acknowledgment}
P. Kumar is supported by a post-doctoral fellowship from Anusandhan National Research Foundation, Govt. of India through the grant SERB/F/3132/2023-2024 to S. S. Garani.
\bibliography{mybibliography}
\end{document}